\documentclass[preprint,pre,floats,showkeys,aps,amsmath,amssymb,12pt]{revtex4}
\usepackage{graphicx}
\usepackage{natbib}
\usepackage{amsthm}
\usepackage[left = 1in, top=1.2in,right=1in,bottom=1.2in,a4paper]{geometry}
\usepackage{amsmath}
\usepackage{amssymb}
\usepackage{hyperref}
\newcommand{\ket}[1]{|#1\rangle}
\newcommand{\bra}[1]{\langle #1|}
\begin{document}

\title{Does Considering Quantum Correlations Resolve the Information Paradox?}
\author{Avik Roy\footnote{avikroy@eee.buet.ac.bd}, Moinul Hossain Rahat\footnote{rahat.moin90@gmail.com}, Md.\! Abdul Matin (deceased) and Mishkat Al Alvi\footnote{mishkat.alvi@gmail.com}}
\affiliation{Department of Electrical \& Electronic Engineering, Bangladesh University of Engineering \& Technology, Dhaka 1000, Bangladesh}

\begin{abstract}
In this paper, we analyze whether quantum correlations between successive steps of evaporation can open any way to resolve the black hole information paradox. Recently a celebrated result in literature shows that `small' correction to leading order Hawking analysis fails to restore unitarity in black hole evaporation. We study a toy qubit model of evaporation allowing small quantum correlations between successive steps and verify the previous result. Then we generalize the concept of correction to Hawking state by relaxing the `smallness' condition. Our result generates a nontrivial upper and lower bound on change in entanglement entropy in the evaporation process. This gives us a quantitative measure of correction that would mathematically facilitate restoration of unitarity in black hole evaporation. We then investigate whether this result is compatible to the established physical constraints of unitary evolution of a state in a subsystem. We find that the generalized bound on entanglement entropy leads to significant deviation from Page curve. This leads us to agree with the recent claim in literature that no amount of correction in the form of Bell pair states would lead to any resolution to the information paradox.
\keywords{Black hole information paradox, Quantum correlations, Small correction}
\end{abstract}

\maketitle

\section{Introduction}


After Bekenstein had argued in favor of associating entropy with black holes \cite{bek-1}, a series of results \cite{penr}, \cite{christ}, \cite{hawk-1} converged to a complete formulation of black hole thermodynamics as presented in \cite{b-c-h}. Following Hawking's results \cite{hawk-1}, Bekenstein proposed a linear relationship between the black hole entropy and the surface area of black hole event horizon \cite{bek-2} and formulated the generalized second law of thermodynamics. Thermodynamic features of a black hole suggested that black holes should radiate like any black body, though the idea is in clear contradiction with the classical description of a black hole. In solving the conundrum, Hawking introduced a semiclassical approach introducing quantum mechanical effects on the matter field while still ignoring the quantum gravitational effects. His analysis in \cite{hawk-2} yielded a bizarre result that a black hole radiates as a black body of temperature $\frac{\hbar\kappa}{2\pi k}$. This phenomenon, called `Hawking radiation', is attributed to the vacuum fluctuations near the event horizon causing a flux of thermal radiation to be observed by an asymptotic observer. 


However, such a process violates the underlying principle of quantum evolutions -- unitarity. Reasonably, it can be assumed that a black hole is formed from some matter in pure state. On the other hand, radiation is thermal, characterized by the black hole temperature and described by a mixed state \cite{hawk-3}, \cite{gidd-1}. Two black holes with identical temperature would radiate identically irrespective of the matter that forms it, causing irretrievable loss of information. Violation of unitarity by Hawking's semiclassical results is known as the \emph{black hole information paradox}.

A reasonable way to look at the paradox is through monotonic increase of entanglement entropy. Entangled pairs are created in every step of evaporation. Hence, any model of evaporation that respects unitarity must find a way to revert the direction of change in entanglement entropy at some stage of evaporation.

A popular idea about resolving the information paradox is to allow `small' corrections to the leading order Hawking analysis. Such corrections are usually attributed to quantum correlations among radiated quanta. Although such correlations should be `small', such effects might add up to form a significant deviation from the standard thermal spectrum and resolve the paradox. By the end of its lifetime, a black hole will have radiated away most of the information \cite{presk}. However, this speculation has been invalidated in \cite{peda} using a mathematical framework of two dimensional Hilbert space spanned by two Bell pair states. This result has been exemplified by some toy models in \cite{infall} and \cite{mathur-plumberg} considering `small' correction to the Hawking state in the form of weak correlation between quanta produced in successive steps of black hole radiation.

In this paper, we study a more generalized toy model allowing arbitrary, small corrections in the form of quantum correlations between successive pairs in evaporation process and reassure the same result.
 

Can entanglement entropy start to decrease at some point of evaporation if we allow `any' amount of correction to the leading order Hawking state? In the next part of this work, we relax the `smallness' condition as prescribed in \cite{peda} and \cite{avery} and find rigorous nontrivial upper and lower bound on change in entanglement entropy. Such formulation allows us to quantify the amount of correction required to restore unitarity. We find that `not-so-small' or `large' correction would be required to achieve a unitary evolution.

However, we find that such a condition leads to significant deviation from the established results by Page \cite{Page-1}, \cite{Page-2}, \cite{Page-3}. This leads us to conclude that the amount of correction dictated by our qubit formalism is not compatible with any physical unitary evolution. This result is in line with a more general result earlier proved in \cite{gidd-4} -- no order of correction in the form of admixture of Bell pair states suffices to retrieve information from black hole radiation.



The organization of the paper is as follows. Section \ref{sec-2} introduces the leading order formulation of the black hole information paradox. In section \ref{sec-3}, the authors present the motivation to look for resolving the paradox using `small' corrections. Section \ref{sec-4} discusses the arguments why the problem cannot be avoided incorporating `small' corrections. In section \ref{sec-5}, a simple toy model supporting the results of the previous section is briefly analyzed. Section \ref{sec-6} generalizes the arguments in section \ref{sec-4} and sections \ref{sec-7} establishes the insufficiency of Bell pair corrections for information retrieval from black hole radiation. In conclusion, section \ref{sec-9} summarizes the findings of the paper.

\section{Leading Order Formulation of the Black Hole Information Paradox}\label{sec-2}

In this section we take a quick review of the leading order formulation of the information paradox as presented in \cite{peda} and subsequently followed up in \cite{infall}, \cite{avery}, \cite{icece}, \cite{gidd-2}. The purpose is to familiarize the reader with the framework in which we shall pursue some aspects of possible resolution to the paradox in subsequent sections.
 
It can be showed \cite{hawk-3}, \cite{gidd-1} that the joint system of the particle pair created near the horizon can be given as --
\begin{equation}\label{pair}
\ket{\Psi}_{pair} = Ce^{\beta c^{\dagger} b^{\dagger}}\ket{0}_c \ket{0}_b
\end{equation}
where $\beta$ is a number of order unity, $c^\dagger$ and $b^\dagger$ are creation operators, $\ket{0}$ represents the vacuum state and $c,b$ represent the ingoing and outgoing quanta respectively. This state is generally an entangled one, indicated by a trivially non-vanishing  von Neumann entropy of any of the subsystems. Leading order analysis makes the following assumptions --

(a) The black hole geometry can be foliated by a set of spacelike slices continuous at the horizon \cite{exactly}, \cite{peda} . Evolution of states over these slices can be explained by local quantum field theory as long as black hole dimensions are much larger than Planckian order.

(b) The state of the collapsing shell is represented by the state vector $\ket{\Psi}_M$. During the evolution, state of this matter usually remains unchanged, i.e. $\ket{\Psi}_M \to \ket{\Psi}_M$. Given the slicing of spacetime as in (a) this assumption is inherently implied by locality. Large spacelike separations between the newly evolved pair and the collapsed matter prohibits any propagation of local effects.  However, allowing some local operator act on this matter subspace alone while leaving other subspaces unchanged will essentially produce the same result.

(c) State of the newly created pair can be approximated as
\begin{equation}\label{LO}
  \ket{\Psi}_{pair} = \frac{1}{\sqrt{2}}\left(\ket{0}_c\ket{0}_b + \ket{1}_c\ket{1}_b\right)
\end{equation}
This is one of the Bell pair states and hereafter we shall refer to it as the `Hawking state'. This state is a simplified version of (\ref{pair}) considering the newly evolved pair to be maximally entangled. It should be noted that instead of using an infinite dimensional Hilbert space as in (\ref{pair}), this simplified state reflects essentially the same physics on a much simpler qubit space. The entropy of entanglement associated to any of the subspaces is given by
\begin{equation}
  S_{ent} = \log 2
\end{equation}

(d) At each successive step of pair creation the initial matter and the earlier created quanta move along the spacelike slice by distances of the order of $R$, the radius of the black hole event horizon. Stretched spacelike slice causes a new pair to evolve according to (\ref{LO}) while the earlier qubits and the matter state moves farther along the spacelike slice. We ignore any influence of these quanta over the newly created pair and describe the entire system as a tensor product state-
\begin{equation}\
  \ket{\Psi} = \ket{\Psi}_M \otimes\ket{\Psi}_1 \otimes \ket{\Psi}_2 \otimes \ldots \otimes \ket{\Psi}_N
\end{equation}
where each of the $\ket{\Psi}_i$ state is given by (\ref{LO}). After emission of $N$ pairs, the entanglement entropy between the ingoing and outgoing quanta is given by
 \begin{equation}
  S_{ent} = N\log 2
\end{equation}

It is this monotonically increasing entangle entropy that lies at the heart of the paradox. If the remnant scenario is discarded as advocated in \cite{presk}, \cite{bek-3} and \cite{sussk}, the black hole completely vanishes and only the outgoing quanta remain as a thermal radiation with nonzero entropy despite starting with a pure state with zero entanglement entropy. 

One intriguing feature of this analysis is the increasing dimension of the Hilbert space. The black hole starts with a large but finite entropy and hence should be constrained to a quantum mechanical description within a Hilbert space of finite dimension \cite{gidd-3}. Unitary evolution preserves entropy which is in general a function of the dimension of the Hilbert space, though dimensions of individual subspaces can decrease as advocated in \cite{gidd-3}. This might intuitively imply an inevitable departure from unitarity. However, it has been showed in \cite{avery} that dimension can be effectively preserved in this framework as well. One should consider the additional $2k$ degrees of freedom after $k$ steps of pair production as auxiliary and not physical ones -- a mathematical tool introduced only to conveniently represent the phenomenon of pair production. Since we can always imagine an ancillary space to purify a mixed state by enlarging the Hilbert space, we might expect a restoration of unitarity over an enlarged Hilbert space provided that the radiation quanta at the end of the evaporation are in a pure state. Similar models involving nonlocal modes of information transfer have been presented in \cite{gidd-3}, \cite{gidd-4}.

\section{Motivation of Small Corrections to Leading Order}\label{sec-3}

The leading order analysis captures the coarse grained physical reality -- the salient features of the contradiction between black hole radiation and unitarity. However, it ignores any possible correlation between the outgoing quanta and the earlier ones or the black hole forming matter. Associating back reaction, quantum gravity effects or small perturbations to the Schwarzschild geometry might modify the state (\ref{LO}) of the particle pair. However, incorporating such details should bring about only some \emph{small} correction. This is because the horizon is still a low curvature region and LQFT should provide a reasonably accurate picture of quantum processes near the horizon.

Despite the predicted smallness of the correction, it has been widely speculated that these corrections might build up in course of time to eventually restore unitarity. Such speculations rely on the large number of radiated quanta over the lifetime of a black hole. One simple example might illustrate the effect of accumulating corrections in departure from predicted result. Let us assume that a particular process generates photons in a definite polarization state, say the horizontally polarized state --
\begin{equation}
\hat{\rho} = \ket{\rightarrow}\bra{\rightarrow}
\end{equation}
This is a pure state and hence possesses zero entanglement entropy. However, some error in the process might cause this state to be slightly modified as a mixed one.
\begin{equation}
\hat{\rho}' = \left(1 - \epsilon\right)\ket{\rightarrow}\bra{\rightarrow} + \epsilon \ket{\uparrow}\bra{\uparrow}
\end{equation}
Here $\epsilon$ is a small positive number, much less than unity. Entanglement entropy associated with this state is given by $S(\hat{\rho}') = -\left(\left(1-\epsilon\right)\log\left(1-\epsilon\right) + \epsilon\log\epsilon\right)$ which is close to zero. However, after $N$ such photons are emitted, entanglement entropy of the joint system could be considerably large where the entanglement entropy of the joint system of $N$ \emph{unmodified} photons will be still zero. This departure can also be quantified in measures of fidelity. As a measure of closeness of two quantum states, fidelity is defined as
\begin{equation}
F(\hat{\rho},\hat{\rho}') = \mathrm{Tr}\sqrt{\hat{\rho}^{1/2}\hat{\rho}'\hat{\rho}^{1/2}}
\end{equation}
$\hat{\rho}'$ closely resembles $\hat{\rho}$ as $F\left(\hat{\rho}, \hat{\rho}'\right)= \sqrt{1 - \epsilon}$. However,  we find the system of $N$ photons exhibit only low fidelity with the joint state of $N$ unmodified photons as $N$ becomes large.
\begin{equation}
F\left(\hat{\rho}^{\otimes N}, \hat{\rho}'^{\otimes N}\right) = \left(1-\epsilon\right)^{N/2} \to 0
\end{equation}
As accumulated small corrections over a large number of states can yield significant departure from a desired state, it is speculative to consider such processes in case of black hole radiation. In literature, alternative formulations of the radiation spectrum of black holes consider such corrections. For example, particle production has been associated to a tunneling picture in \cite{parikh}, \cite{parikh2}, \cite{tunnel} where existence of non trivial correlations among the radiated quanta have been demonstrated to exist because of strict imposition of conservation of energy. It has been advocated in \cite{zhang1} and \cite{zhang2} that non thermal corrections might lead to information leakage at least at late times. 

Mathur has addressed the issue of small corrections to leading order analysis in an abstract mathematical framework in \cite{peda}. In the next section we shall review the results from \cite{peda} that show that small corrections actually cannot restore unitarity in the context of black hole radiation.

\section{Small Correction to Hawking State - Mathur's Bound} \label{sec-4}
Let us assume that the created pair at each time-step of evolution is not invariably in the Hawking state, rather it can be in any state of the space spanned by the basis states
\begin{equation}
S^{(1)} = \frac{1}{\sqrt{2}} |0\rangle_{c_{n+1}} |0\rangle_{b_{n+1}} + \frac{1}{\sqrt{2}} |1\rangle_{c_{n+1}} |1\rangle_{b_{n+1}}
\end{equation}
and
\begin{equation}
S^{(2)} = \frac{1}{\sqrt{2}} |0\rangle_{c_{n+1}} |0\rangle_{b_{n+1}} - \frac{1}{\sqrt{2}} |1\rangle_{c_{n+1}} |1\rangle_{b_{n+1}}.
\end{equation}

Here we deliberately choose to avoid the subspace spanned by the states $|0\rangle_{c_{n+1}} |1\rangle_{b_{n+1}}$ and $|1\rangle_{c_{n+1}} |0\rangle_{b_{n+1}}$, because there is not much physical explanation for pair creation in such states. Moreover, a four dimensional space considering all these four states as basis states has been considered in \cite{infall}, and it shows no result essentially different from that obtained from a toy model in two dimensional analysis.

The complete system consists of the matter $M$, inside quanta $c_i$ and outside quanta $b_i$. Let us choose a basis $|\psi_i\rangle$ for the subsystem comprising matter $M$ and inside quanta $c_i$ and another basis $|\chi_i\rangle$ for the radiation subsystem comprising of the $b_i$ quanta. Then the state of the complete system can be expressed as
\begin{equation}
|\Psi_{M,c}, \psi_b(t_n)\rangle = \sum_{m,n}C_{m,n}\psi_m\chi_n.
\end{equation}

We can always perform Schmidt decomposition to express this state as
\begin{equation}
|\Psi_{M,c}, \psi_b(t_n)\rangle = \sum_{i}C_{i}\psi_i\chi_i.
\end{equation}

At next time-step of evolution, the $b_i$ quanta move farther apart from the vicinity of the hole. Since the hole can no longer influence their evolution, we consider that no further evolution takes place for the outgoing quanta. The created pair can be in a superposition of the states $S^{(1)}$ and $S^{(2)}$. Hence the state $\psi_i$ can evolve into
\[
\psi_i \rightarrow \psi_i^{(1)}S^{(1)} + \psi_i^{(2)}S^{(2)}
\]
where the state $\psi_i$ has been expressed as the tensor product of the state $\psi_i^{(i)}$ representing $\{M,c_i\}$ subsystem and $S^{(i)}$ representing the newly created pair. Since $S^{(1)}$ and $S^{(2)}$ are orthonormal states, unitarity requires that
\begin{equation}
\|\psi_i^{(1)}\|^2 + \|\psi_i^{(2)}\|^2 = 1.
\end{equation}
In leading order case, newly created pair is invariably in the state $S^{(1)}$; hence $\psi_i^{(1)} = \psi_i$ and $\psi_i^{(2)} = 0$.

Now,
\begin{eqnarray}
|\Psi_{M,c}, \psi_b(t_{n+1})\rangle &=& \sum_i C_i\big[\psi_i^{(1)}S^{(1)} + \psi_i^{(2)}S^{(2)}\big]\chi_i\\
&=&\Big[\sum_i C_i \psi_i^{(1)}\chi_i\Big]S^{(1)}+\Big[\sum_i C_i \psi_i^{(2)}\chi_i\Big]S^{(2)}\\
&=&\Lambda^{(1)}S^{(1)} + \Lambda^{(2)}S^{(2)}
\end{eqnarray}
where $\Lambda^{(1)}=\sum_i C_i \psi_i^{(1)}\chi_i$, $\Lambda^{(2)}=\sum_i C_i \psi_i^{(2)}\chi_i$.

Entanglement entropy of the $\{b\}$ quanta
\begin{eqnarray}
    S_{b_{n}} &=& - \mathrm{tr} \hat{\rho}_{b_{n}}\log\hat{\rho}_{b_{n}}\\ \nonumber
        &=& \sum_i |C_i|^2 \log |C_i|^2 = S_0.
\end{eqnarray}

Since earlier emitted outside quanta can no longer be influenced, we have the same entanglement entropy of $\{b\}$ quanta at time-step $t_{n+1}$.

Now, entanglement entropy of the pair $(b_{n+1},c_{n+1})$ with the rest of the system is given by
\begin{equation}
S(b_{n+1},c_{n+1})= - \mathrm{tr} \hat{\rho}_{b_{n+1},c_{n+1}} \log \hat{\rho}_{b_{n+1},c_{n+1}}.
\end{equation}
Density matrix for the system $(b_{n+1},c_{n+1})$ is
\begin{equation}
\hat{\rho}_{b_{n+1},c_{n+1}} = \left[
    \begin{array}{ccc}
    \langle\Lambda^{(1)}|\Lambda^{(1)}\rangle & \langle\Lambda^{(1)}|\Lambda^{(2)}\rangle\\
    \langle\Lambda^{(2)}|\Lambda^{(1)}\rangle & \langle\Lambda^{(2)}|\Lambda^{(2)}\rangle
    \end{array}
    \right].
\end{equation}
Again, normalization of $|\Psi_{M,c}, \psi_b (t_{n+1})\rangle$ requires
\[ \|\Lambda^{(1)}\|^2 + \|\Lambda^{(2)}\|^2 = 1. \]

Mathur defined the correction to leading order Hawking state to be `small' in the sense that $\|\Lambda^{(2)}\|^2 <\epsilon$, where $\epsilon \ll 1$ \cite{peda}. This definition implies that there is very small admixture of the $S^{(2)}$ state with the $S^{(1)}$ state when new particle pairs are generated. Under such `small' departure from leading order semi-classical Hawking analysis, Mathur showed that entanglement entropy at each time-step increases by at least $\log 2 - 2\epsilon$ \cite{peda}. Since $\epsilon$ is a very small number, by definition, there is still order unity increase in entanglement entropy at each time-step, when `small' corrections are allowed. This result has been exemplified by a simple model incorporating small correlations between quanta created in consecutive steps \cite{infall}. Some other toy models of evaporation have been studied in \cite{mathur-plumberg}, which also conform to this result. In the next section we shall illustrate Mathur's bound in a toy model that incorporates a more general correlation compared to the model presented in \cite{infall}.

\section{A Simple Toy Model Incorporating Small Correction} \label{sec-5}
Let us consider a model of evaporation that allows all previously radiated quanta to modify the state of the newly evolved pair. At the first stage of evaporation, the state of the pair is given by the Bell pair state
\begin{equation}
\ket{\Psi_1} = \frac{1}{\sqrt{2}} \left( \ket{0}_b \ket{0}_c + \ket{1}_b \ket{1}_c \right)
\end{equation}
Assuming that the Hilbert space of the newly evolved pair is spanned by the $\ket{00}$ and $\ket{11}$ states only, the general form of the $n$ pair state is given by
\begin{equation}
\ket{\Psi_n} = \sum_{i=0}^{2^n - 1} a_i \ket{i}_b \ket{i}_c
\end{equation}
Here $\ket{i}$ denotes the $n$ bit representation of the integer $i$. Normalization requires that $\sum_{i=0}^{2^n - 1} |a_i|^2 = 1$. Let the state of the newly evolved pair depends on the $n$ qubit state $\ket{i}$ and the evolution is given by
\begin{equation}
\ket{\Psi_{n+1}} = \sum_{i=0}^{2^n - 1} a_i \ket{i}_b \ket{i}_c \otimes \left( \frac{e^{s_{i,n,0}}}{\sqrt{2}} \ket{0}_b \ket{0}_c + \frac{e^{s_{i,n,1}}}{\sqrt{2}} \ket{1}_b \ket{1}_c \right)
\end{equation}
The term $\frac{e^{s_{i,n,j}}}{\sqrt{2}}$ denotes the probability amplitude of observing the new pair at the state $\ket{j}_b \ket{j}_c$ given the joint state of the earlier pairs is given by $\ket{i}_b \ket{i}_c$. If the correction introduced is small, $\left| s_{i,n,j} \right|$ has to be a small positive number. Normalization requires
\begin{equation}\label{norm}
\sum_{j=0}^{1} e^{2s_{i,n,j}} = 2
\end{equation}
Entanglement entropy of the $n+1$ radiated quanta is hence given by
\begin{align*}
S(n+1) &= - \sum_{i} \sum_{j} \left( \frac{a_i e^{s_{i,n,j}}}{\sqrt{2}} \right)^2 \log \left( \frac{a_i e^{s_{i,n,j}}}{\sqrt{2}} \right)^2 \\
& = - \sum_{i} \sum_{j} a_i^2 e^{2s_{i,n,j}} \left( \log a_i + s_{i,n,j} - \frac{1}{2}\log 2 \right) \\
& = -\sum_i a_i^2 \log a_i^2 + \log 2 - \sum_i a_i^2 \sum_j s_{i,n,j} e^{2s_{i,n,j}}
\end{align*}
Identifying the first term in the last line as the entanglement entropy of the first $n$ quanta and defining $\Delta S = S(n+1) - S(n)$
\begin{equation}\label{ds}
\Delta S = \log 2 - \sum_i a_i^2 \sum_j s_{i,n,j} e^{2s_{i,n,j}}
\end{equation}
Since $s_{i,n,j}$ are quantities of small magnitude, using the approximation $e^x \sim 1 + x$ in (\ref{norm})
\begin{equation}
\sum_{j=0}^{1} (1 + 2s_{i,n,j}) = 2 \Rightarrow s_{i,n,0} = -s_{i,n,1}
\end{equation}
Using these smallness approximations in (\ref{ds})
\begin{align*}
\Delta S &= \log 2 - \sum_i a_i^2 \sum_j s_{i,n,j} (1 + 2s_{i,n,j}) \\
& = \log 2 - 4\sum_i a_i^2 s_{i,n,0}^2 \\
& \ge \log 2 - 4 \mathrm{max}(s_{i,n,0}^2)
\end{align*}
which implies that $\Delta S$ is a necessarily positive quantity as long as the smallness condition holds.

\section{Generalization of Mathur's Bound} \label{sec-6}

In this section we review the generalization of Mathur's bound as presented in \cite{icece}. This generalization is motivated with a view to exploring the effects of introducing corrections of arbitrary magnitude to the leading order analysis and explore, if any, the possibilities of restoration of unitarity. The results in \cite{icece} establish nontrivially strong upper and lower bounds to $\Delta S$. To facilitate the derivation, we first derive two lemmas leading to the derivation of the final result as a theorem.

Let us introduce the correction parameters and corresponding quantities first:
\begin{equation}
\langle\Lambda^{(2)}|\Lambda^{(2)}\rangle = \epsilon^2,
\end{equation}
\begin{equation}
\langle\Lambda^{(1)}|\Lambda^{(1)}\rangle = 1-\epsilon^2,
\end{equation}
\begin{align}\label{eqn:con3}
\langle\Lambda^{(1)}|\Lambda^{(2)}\rangle = \langle\Lambda^{(2)}|\Lambda^{(1)}\rangle = \epsilon_2,
\end{align}
\begin{equation}
\gamma^2 = 1-4\big[\epsilon^2(1-\epsilon^2)-\epsilon_2^2\big].
\end{equation}

We shall call the correction small if $|\epsilon| \ll 1$. It should be noted that $\epsilon_2$ should in general be a complex quantity and $\langle\Lambda^{(1)}|\Lambda^{(2)}\rangle = \langle\Lambda^{(2)}|\Lambda^{(1)}\rangle^{*}$. However, in (\ref{eqn:con3}) we assume a real $\epsilon_2$ for the sake of simplicity, though letting it have complex values will give the same result.

\newtheorem{lone}{Lemma}
\begin{lone}
Entanglement entropy of the newly created pair is given by \[S(p)\leq\sqrt{1-\gamma^2}\log2.\]
\end{lone}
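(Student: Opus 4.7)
The plan is to reduce the lemma to a one-variable scalar inequality in $\gamma$ and then close it by elementary calculus. The reduction can be read directly off the preceding section: the newly created pair carries the reduced density matrix
\[
\hat{\rho}_{b_{n+1}c_{n+1}} = \begin{pmatrix} 1-\epsilon^2 & \epsilon_2 \\ \epsilon_2 & \epsilon^2 \end{pmatrix},
\]
whose trace equals $1$ and whose determinant equals $\epsilon^2(1-\epsilon^2)-\epsilon_2^2 = (1-\gamma^2)/4$ by the very definition of $\gamma^2$. Solving the characteristic polynomial yields eigenvalues $\lambda_{\pm}=(1\pm\gamma)/2$, where I take $\gamma\in[0,1]$ (legitimate, since only $\gamma^2$ enters the hypotheses). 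Plugging these into the von Neumann formula produces
\[
S(p) = -\tfrac{1+\gamma}{2}\log\tfrac{1+\gamma}{2} - \tfrac{1-\gamma}{2}\log\tfrac{1-\gamma}{2},
\]
so the claim collapses to the one-variable inequality $H(\gamma)\le\sqrt{1-\gamma^2}\,\log 2$ on $[0,1]$, where $H(\gamma)$ denotes the right-hand side just displayed.

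The substantive work is now to prove this scalar inequality, which is sharp at both endpoints: $H(0)=\log 2=\sqrt{1}\log 2$ and $H(1)=0=\sqrt{0}\log 2$. I would introduce the difference
\[
F(\gamma) := \sqrt{1-\gamma^2}\,\log 2 - H(\gamma)
\]
and prove $F\ge 0$ on $[0,1]$ by a sign analysis of its first two derivatives. A direct computation starting from $H'(\gamma)=\tfrac{1}{2}\log\tfrac{1-\gamma}{1+\gamma}$ gives
\[
F'(\gamma) = -\frac{\gamma\log 2}{\sqrt{1-\gamma^2}} + \tfrac{1}{2}\log\tfrac{1+\gamma}{1-\gamma},
\]
with $F'(0)=0$ and $F'(\gamma)\to -\infty$ as $\gamma\to 1^{-}$. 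A further differentiation collapses into the compact expression
\[
F''(\gamma) = \frac{\sqrt{1-\gamma^2}-\log 2}{(1-\gamma^2)^{3/2}},
\]
which is positive for $\gamma<\sqrt{1-(\log 2)^2}$ and negative thereafter.

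With $F''$ changing sign exactly once on $(0,1)$, the function $F'$ rises from zero at the left endpoint to a single interior maximum and then falls to $-\infty$, so it has exactly one interior zero. Consequently $F$ possesses a single interior critical point, which is necessarily a maximum; combined with the boundary data $F(0)=F(1)=0$, this forces $F\ge 0$ throughout $[0,1]$, giving $H(\gamma)\le\sqrt{1-\gamma^2}\log 2$ and hence the lemma. The main obstacle I anticipate is packaging the second derivative into the clean form above and verifying the sign-changing structure of the factor $\sqrt{1-\gamma^2}-\log 2$; once that compact expression is in hand, the monotonicity bookkeeping and the final nonnegativity of $F$ follow immediately from the boundary values.
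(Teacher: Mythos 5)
Your proof is correct and follows essentially the same route as the paper: diagonalize the pair's reduced density matrix to get eigenvalues $(1\pm\gamma)/2$ and reduce the claim to the scalar inequality $\log 2-\tfrac{1}{2}[(1+\gamma)\log(1+\gamma)+(1-\gamma)\log(1-\gamma)]\leq\sqrt{1-\gamma^2}\log 2$ on $[0,1]$. The only difference is that the paper asserts this inequality ``can be shown easily,'' whereas you actually prove it via the sign analysis of $F''(\gamma)=\bigl(\sqrt{1-\gamma^2}-\log 2\bigr)/(1-\gamma^2)^{3/2}$, which is a correct and welcome filling-in of that omitted step.
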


\begin{proof}
Reduced density matrix for the pair
\begin{equation}
\hat{\rho}_p = \left[
    \begin{array}{ccc}
    1-\epsilon^2 & \epsilon_2\\
    \epsilon_2 & \epsilon^2
    \end{array}
    \right].
\end{equation}

Eigenvalues of this matrix are: $\lambda_1=\frac{1+\gamma}{2}$ and $\lambda_2=\frac{1-\gamma}{2}$.
Hence entanglement entropy of the pair is

\begin{eqnarray}
S(p) &=& -\mathrm{tr} \hat{\rho}_p \log \hat{\rho}_p = -\sum_{i=1}^2 \lambda_i \log \lambda_i \nonumber \\
    &=& \log 2 - \frac{1}{2}[(1+\gamma)\log(1+\gamma)+ (1-\gamma)\log(1-\gamma)].
\end{eqnarray}

It can be shown easily for $0 \leq x \leq 1$ that
\begin{eqnarray}\label{label14}
(1-x^2)\log2 &\leq& \log2 - \frac{1}{2}[(1+x)\log(1+x)\\ \nonumber
&&+(1-x)\log(1-x)] \leq \sqrt{1-x^2}\log2.
\end{eqnarray}
Now, the result follows from (\ref{label14}).
\end{proof}

\newtheorem{lthree}[lone]{Lemma}
\begin{lthree}
\begin{equation}
(1-4\epsilon_2^2)\log 2 \leq S(b_{n+1})=S(c_{n+1})\!\leq\! \sqrt{1-4\epsilon_2^2}\log2 \nonumber
\end{equation}
\end{lthree}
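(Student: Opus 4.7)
The plan is to compute the reduced density matrix $\hat\rho_{b_{n+1}}$ explicitly and then invoke the very same inequality (\ref{label14}) that was used in Lemma 1, but with a different argument inside the square root. The key observation is that while Lemma 1 diagonalized the reduced pair density matrix and got eigenvalues $(1\pm\gamma)/2$, here the single-qubit reduced matrix will turn out to have eigenvalues $(1\pm 2\epsilon_2)/2$, so the same functional inequality immediately yields the stated two-sided bound with $x = 2\epsilon_2$.

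First I would rewrite the joint state in the computational basis of the newly created pair. Since $S^{(1)}=\frac{1}{\sqrt 2}(|00\rangle+|11\rangle)$ and $S^{(2)}=\frac{1}{\sqrt 2}(|00\rangle-|11\rangle)$ (with the convention $|ij\rangle \equiv |i\rangle_{b_{n+1}}|j\rangle_{c_{n+1}}$), the full state becomes
\begin{equation}
|\Psi(t_{n+1})\rangle = |A\rangle\,|00\rangle + |B\rangle\,|11\rangle, \qquad |A\rangle = \tfrac{1}{\sqrt 2}\bigl(\Lambda^{(1)}+\Lambda^{(2)}\bigr),\quad |B\rangle = \tfrac{1}{\sqrt 2}\bigl(\Lambda^{(1)}-\Lambda^{(2)}\bigr).
\end{equation}
Then I would trace out the ambient system $\{M, c_i, b_1, \ldots, b_n\}$ and the partner qubit $c_{n+1}$. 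Because the partner index matches the $b_{n+1}$ index in both surviving kets, the $|0\rangle_{c}\langle 1|_c$ and $|1\rangle_c\langle 0|_c$ pieces that would produce off-diagonal elements in $\hat\rho_{b_{n+1}}$ are annihilated by the partial trace over $c_{n+1}$.

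The resulting reduced matrix is diagonal with entries $\langle A|A\rangle$ and $\langle B|B\rangle$. Substituting the definitions of $|A\rangle,|B\rangle$ and the inner products already supplied in the setup, I get
\begin{equation}
\langle A|A\rangle = \tfrac{1}{2}(1+2\epsilon_2), \qquad \langle B|B\rangle = \tfrac{1}{2}(1-2\epsilon_2),
\end{equation}
so $S(b_{n+1}) = \log 2 - \tfrac{1}{2}\bigl[(1+2\epsilon_2)\log(1+2\epsilon_2) + (1-2\epsilon_2)\log(1-2\epsilon_2)\bigr]$. Applying inequality (\ref{label14}) with $x=2\epsilon_2$ then gives exactly the claimed bounds. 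The equality $S(b_{n+1}) = S(c_{n+1})$ follows either by observing that the Schmidt decomposition of $|\Psi\rangle$ on the bipartition $(b_{n+1})$ versus (everything else) has the same Schmidt coefficients as the bipartition $(c_{n+1})$ versus (everything else) — this is manifest from the $b$-$c$ exchange symmetry of $S^{(1)}$ and $S^{(2)}$ — or by redoing the calculation tracing out $b_{n+1}$ instead of $c_{n+1}$ and noting the algebra is identical.

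There is no serious obstacle here; the only thing one must be careful about is that the required inequality (\ref{label14}) demands $0\le 2\epsilon_2 \le 1$, which should be verified as part of the admissibility of the correction parameters (from Cauchy–Schwarz, $|\epsilon_2| \le \epsilon\sqrt{1-\epsilon^2}\le \tfrac{1}{2}$, and WLOG we may take $\epsilon_2\ge 0$ since the entropy depends only on $\epsilon_2^2$). Everything else is substitution into a previously established lemma.
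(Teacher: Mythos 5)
Your proposal is correct and follows essentially the same route as the paper: expand the state in the $|00\rangle,|11\rangle$ basis of the new pair, observe that the reduced density matrix of $b_{n+1}$ (or $c_{n+1}$) is diagonal with entries $\frac{1}{2}(1\pm 2\epsilon_2)$, and apply inequality (\ref{label14}) with $x=2\epsilon_2$. Your added remark checking $|2\epsilon_2|\le 1$ via Cauchy--Schwarz is a small point of care the paper leaves implicit, but it does not change the argument.
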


\begin{proof}
The complete state of the system after creation of $n+1$ pairs
\begin{eqnarray}
|\Psi_{M,c}, \psi_b(t_{n+1})\rangle  &=& \Big[|0\rangle_{c_{n+1}}|0\rangle_{b_{n+1}}\frac{1}{\sqrt{2}}(\Lambda^{(1)}+\Lambda^{(2)})\Big]\nonumber\\
&&+\Big[|1\rangle_{c_{n+1}}|1\rangle_{b_{n+1}}\frac{1}{\sqrt{2}}\!(\!\Lambda^{(1)}-\Lambda^{(2)})\Big].
\end{eqnarray}
Now, the reduced density matrix describing $c_{n+1}$ or $b_{n+1}$ quanta is
\begin{eqnarray}
\hat{\rho}_{b_{n+1}} &=& \hat{\rho}_{c_{n+1}}\\
&=&\left[\begin{array}{cc}
\frac{1}{2}\langle(\Lambda^{(1)}\!+\Lambda^{(2)})|(\Lambda^{(1)}+\Lambda^{(2)})\rangle & 0 \\
0 & \frac{1}{2}\langle(\Lambda^{(1)}-\Lambda^{(2)})|(\Lambda^{(1)}-\Lambda^{(2)})\rangle
\end{array}\right]\nonumber\\
&=& \left[\begin{array}{cc}
\frac{1+2\epsilon_2}{2} & 0\\
0 & \frac{1-2\epsilon_2}{2}
\end{array}\right].
\end{eqnarray}
Then, entanglement entropy of the $c_{n+1}$ or $b_{n+1}$ quanta is
\begin{eqnarray}
S(b_{n+1})=S(c_{n+1}) &=& \log2 - \frac{1+2\epsilon_2}{2}\log(1+2\epsilon_2)\nonumber\\
        &&-\: \frac{1-2\epsilon_2}{2}\log(1-2\epsilon_2).
\end{eqnarray}
Now the result follows directly from (\ref{label14}).
\end{proof}

With the use of these lemmas, we now prove our desired result as a theorem.
\vspace{3 mm}
\newtheorem{nthm}{Theorem}
\begin{nthm}
Change of entanglement entropy from time-step $t_n$ to $t_{n+1}$ is restricted by the following bound:
\begin{equation}\label{label10}
1-4\epsilon_2^2-\sqrt{1-\gamma^2} \leq \frac{\Delta S}{\log 2} \leq \sqrt{1-4\epsilon_2^2}
\end{equation}
where $\Delta S = S(b_{n+1}, \{b\})-S(\{b\})$.
\end{nthm}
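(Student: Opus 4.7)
The plan is to derive the two bounds separately, the upper one from ordinary subadditivity and the lower one from strong subadditivity of the von Neumann entropy, with Lemmas 1 and 2 supplying the final numerical estimates. Two preliminary observations will be useful. First, because both $S^{(1)}$ and $S^{(2)}$ are supported in the $\{\ket{00},\ket{11}\}$ sector of the new pair, tracing out $c_{n+1}$ or $b_{n+1}$ gives the same reduced state, so $S(c_{n+1})=S(b_{n+1})$ at $t_{n+1}$. Second, the step $t_n\to t_{n+1}$ implements a unitary acting only on the complement of $\{b\}$, so $S(\{b\})=S_0$ at $t_{n+1}$ as well, matching its value at $t_n$.

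For the upper bound I apply ordinary subadditivity $S(X\cup Y)\le S(X)+S(Y)$ with $X=\{b\}$ and $Y=\{b_{n+1}\}$ to obtain
\[
\Delta S = S(b_{n+1},\{b\}) - S_0 \;\le\; S(b_{n+1}).
\]
The upper inequality in Lemma 2 then gives $\Delta S/\log 2\le\sqrt{1-4\epsilon_2^2}$ directly.

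For the lower bound I split the full system into three disjoint pieces $A=\{M,c_1,\ldots,c_n\}$, $B=\{c_{n+1}\}$ and $C=\{b_{n+1}\}$, whose complement is $\{b\}$. Purity of $|\Psi_{M,c},\psi_b(t_{n+1})\rangle$ on the global Hilbert space identifies $S(ABC)=S(\{b\})=S_0$ and $S(AB)=S(\{b\}\cup\{b_{n+1}\})=S(b_{n+1},\{b\})$, while $S(BC)=S(p)$ is the entropy of the newly created pair and $S(B)=S(b_{n+1})$ by the first preliminary observation. Strong subadditivity $S(ABC)+S(B)\le S(AB)+S(BC)$ therefore rearranges to
\[
\Delta S \;\ge\; S(b_{n+1})-S(p).
\]
Plugging in the lower bound on $S(b_{n+1})$ from Lemma 2 and the upper bound on $S(p)$ from Lemma 1 produces $\Delta S/\log 2 \ge 1-4\epsilon_2^2-\sqrt{1-\gamma^2}$, as required.

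The main obstacle is recognizing that the Araki--Lieb inequality alone is inadequate for the lower bound: applying it directly to $\{b\}$ and $\{b_{n+1}\}$ yields only the vacuous $\Delta S\ge -S(b_{n+1})$, which does not invoke Lemma 1 and is far weaker than the stated bound. Strong subadditivity is precisely what allows the pair entropy $S(p)$ and the global purity to be combined in a single inequality; once the three-way partition $(A,B,C)$ is identified, the rearrangement into the advertised form is purely algebraic.
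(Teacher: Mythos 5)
Your proof is correct and follows essentially the same route as the paper: subadditivity of $S(\{b\})$ and $S(b_{n+1})$ plus Lemma 2 for the upper bound, and strong subadditivity plus Lemmas 1 and 2 for the lower bound. The only cosmetic difference is that you invoke SSA in the tripartite form $S(ABC)+S(B)\leq S(AB)+S(BC)$ on $\{M,c_1,\ldots,c_n\}$, $c_{n+1}$, $b_{n+1}$ and then use global purity to pass to the $\{b\}$ entropies, whereas the paper applies the equivalent form $S(A)+S(C)\leq S(A,B)+S(B,C)$ directly with $A=\{b\}$, $B=b_{n+1}$, $C=c_{n+1}$; both reduce to the identical inequality $\Delta S\geq S(c_{n+1})-S(p)$.
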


\vspace{2 mm}
\begin{proof}
Let us assume $A=\{b\}, B=b_{n+1}, C=c_{n+1}$.\\
Using \emph{strong subadditivity inequality} we have,
\[S(A)+S(C) \leq S(A,B)+S(B,C)\]
\[\Rightarrow S(\{b\})+S(c_{n+1})\leq S(\{b\},b_{n+1})+S(b_{n+1},c_{n+1})\]
\begin{equation}\label{label8}
\Rightarrow \Delta S \geq (1-4\epsilon_2^2)\log 2 - \sqrt{1-\gamma^2}\log 2.
\end{equation}
Inequality (\ref{label8}) follows from using Lemma 1 and Lemma 2.

Now using \emph{subadditivity inequality} we have,
\[S(A)+S(B) \geq S(A,B)\]
\[\Rightarrow S(\{b\})+S(b_{n+1}) \geq S(\{b\},b_{n+1})\]
\begin{equation}\label{label9}
\Rightarrow \Delta S \leq \sqrt{1-4\epsilon_2^2}\log 2
\end{equation}
Inequality (\ref{label9}) follows from Lemma 2.

The result follows from combining (\ref{label8}) and (\ref{label9}).
\end{proof}
\vspace{2 mm}

 In establishing the upper and lower bounds of change in entanglement entropy, we have dropped the concept of `smallness' of the parameters opening the window for a wide range of speculated effects that may cause departure from the leading order analysis of black hole evaporation. Implication of the result (\ref{label10}) is twofold -- (a) in putting an upper bound on the change in entanglement entropy and (b) in proving a quantitative measure of `correction' that needs to be incorporated in any hope to restore unitarity in the process of Hawking radiation.



\subsection{Nontriviality}
The result obtained in \cite{peda} and its subsequent generalization in \cite{avery} do not engage in any argument that address large corrections. As a result, both results perform poorly asymptotically. We can establish the nontriviality of (\ref{label10}) by looking at its limiting characteristics, at $\epsilon \to 0$ and $\epsilon \to 1$. In both cases, it is straightforward to calculate that both limits approach unity representing maximal increase of entropy, as expected from the model considered. Nontriviality may further be inferred from the maximum difference of the lower and upper bound in (\ref{label10}), as we expect this difference to never exceed $2\log 2$. Denoting this quantity as $D_{\Delta S}$,
\begin{equation}\label{dds}
D_{\Delta S} = \log 2 \left(\sqrt{1-4\epsilon_2^2} + \sqrt{4\left(\epsilon^2(1-\epsilon^2) - \epsilon_2^2\right)} - (1-4\epsilon_2^2)\right)
\end{equation}
It is straightforward to note that for any value of $\epsilon_2$, $D_{\Delta S}$ is maximized when $\epsilon^2(1-\epsilon^2)$ is maximum i.e. $\frac{1}{4}$. Then (\ref{dds}) reduces to
\begin{equation}
\frac{D_{\Delta S}}{\log 2} = 2\sqrt{1-4\epsilon_2^2} - (1-4\epsilon_2^2)
\end{equation}
It is straightforward to show that maximization of $D_{\Delta S}$ requires $\frac{dD_{\Delta S}}{d\epsilon_2} = 0 \Rightarrow \epsilon_2 = 0$ implying
\begin{equation}
\frac{D_{\Delta S}}{\log 2} \le 1
\end{equation}
and hence establishing nontriviality of the bound as expected.



\subsection{Small corrections are not enough}
    The result in (\ref{label10}) facilitates us to demonstrate that small corrections do not suffice to restore unitarity in the process of Hawking radiation. Retrieval of information requires monotonic decrease of entanglement entropy after half of the black hole lifetime \cite{Page-1}. This implies a necessarily negative lower bound in (\ref{label10}) implying


    \begin{equation}\label{label11}
    4\epsilon_2^2 (1-4\epsilon_2^2) > 1-4\epsilon^2 (1-\epsilon^2)
    \end{equation}
    after replacing $\gamma^2 = 1-4\big[\epsilon^2(1-\epsilon^2)-\epsilon_2^2\big]$ and some manipulation.

   The quantity in the left in (\ref{label11}) is bounded from above by a maximum value of $\frac{1}{4}$ implying

    \[1-4\epsilon^2 (1-\epsilon^2) < \frac{1}{4}\]
    \begin{equation}\label{condn}
    \Rightarrow \frac{1}{2}<\epsilon < \frac{\sqrt{3}}{2}.
    \end{equation}

This bound on $\epsilon$ gives us the necessary (not sufficient) order of correction for any possible retrieval of quantum information from black hole radiation. The bound in (\ref{condn}) implies that the cherished order of the correction parameter has to be far greater than what can be dubbed as some \emph{small correction}. While conforming to the results obtained in \cite{peda} and \cite{avery}, the derivation of the bound in (\ref{label10}) does not make any presumption on the order of the correction parameter. The fact that the inequality in (\ref{label10}) remains valid for any order of correction magnitude is a unique feature of the derivation presented in this paper and will contribute to another significant result in the following section. 

As anticipated in \cite{icece}, the inequality in (\ref{condn}) can be interpreted as a necessary condition for information retrieval from black hole radiation. Interestingly, generalizing the model presented by Mathur, Giddings and Shi have shown that no order of correction is sufficient to restore unitarity if admixture of Bell pair states only is allowed \cite{gidd-4}. The following section looks into the issue in light of the qubit model of evaporation we have been considering so far.

\section{Incompatibility of Bell Pair State Correction to Page Curves}\label{sec-7}

In his seminal paper \cite{Page-1}, Don Page conjectured his celebrated result about the average entropy of a subsystem which was later proved by a number of authors in \cite{Pageproof-1}, \cite{Pageproof-2}, \cite{Pageproof-3}. The central idea of this paper was calculating the average behavior of a subsystem of random pure state in a finite dimensional Hilbert space. Assuming that the pure state lives in a Hilbert space of dimension $mn$ where $m$ represents the dimension of the smaller subsystem, the entanglement entropy, denoted by $S_{m,n}$, is given by

\begin{equation}
S_{m,n} = \log m - \frac{m}{2n}
\end{equation}

This result has a simple interpretation, a smaller subsystem will appear to be maximally entangled with the rest of system. Page extrapolated his results for black hole radiation and approximated the time evolution of the radiation entropy for a typical black hole in a pure state \cite{Page-2}, \cite{Page-3}. As shown in figure \ref{fig1}, radiation entropy keeps rising till the halfway radiation point, by the time almost half of the black hole mass has been radiated away. Then the radiation entropy starts decreasing and returns to zero at the end of the black hole lifetime. This point of reversal for radiation entropy implies that actual quantum information is revealed by black hole radiation. Based on the results by Page, we conclude that any unitarity respecting evaporation model of a black hole should closely proximate the Page curve. We shall investigate the behavior of radiation entropy for the model that we consider and eventually show that if our model implies unitarity, evolution of entanglement entropy has to deviate largely from the Page curve. This would lead us to conclude that no order of correction in the model that we consider can actually imply unitarity.

\begin{figure}
  \centering
  \includegraphics[width=3in]{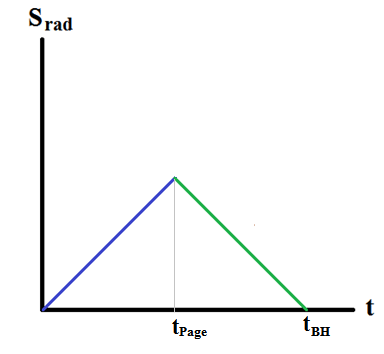}\\
  \caption{The Page curve for a radiating system}\label{fig1}
\end{figure}


    Our result in (\ref{label10}) generates the envelope of any Page-like curve for the model of black holes we have been considering. The monotonically increasing part is bounded by a straight line with a slope equal to the maximum of the upper bound in (\ref{label10}) which is equal to $\log 2$. Similarly, the decreasing part is also bounded by a straight line and the slope of this line is obtained from the minimum of the lower bound in (\ref{label10}). \\

      \begin{eqnarray*}
         \text{The lower bound} &=& 1 - 4\epsilon_2^2 - \sqrt{1-\gamma^2} \\
          &=& 1 - 4\epsilon_2^2 - \sqrt{4\epsilon^2 (1 - \epsilon^2) - 4\epsilon_2^2}
      \end{eqnarray*}
    For any value of $\epsilon_2$, this quantity is minimized when $\epsilon^2(1-\epsilon^2)$ is maximized, i.e., $\epsilon^2(1-\epsilon^2) = \frac{1}{4}$. Therefore, we need to consider the minimum value of the quantity $1 - 4\epsilon_2^2 - \sqrt{1 - 4\epsilon_2^2}$, that happens to be $-\frac{1}{4}$. The Page-like curve that our evaporation model generates has to be bounded in the grey region in figure \ref{fig2}.

\begin{figure}
  \centering
  \includegraphics[width=3in]{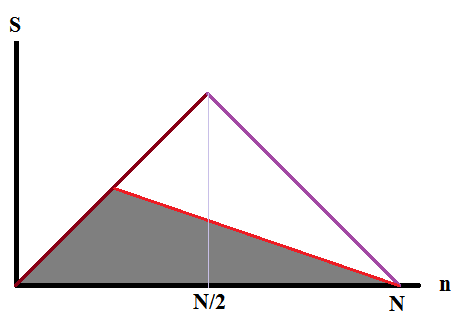}\\
  \caption{The Page-like curves}\label{fig2}
\end{figure}

  Entanglement between the radiation and the remaining black hole systems governs the magnitude of radiation entropy and this is dominated by the dimension of the smaller subsystem. For the bounded region in figure \ref{fig2}, the late radiation entropy deviates significantly from maximal entanglement. Hence, if the evaporation model based on generation of Bell pair states has to be unitary, it would demand significant deviation from the well-established results by Page. This is `problematic', because a generic subsystem should not be subject to such restrictions.

  The qubit evaporation model in question relies on the assumption that evaporation is governed by generation of Bell pair states. However, it keeps the possibility of `any' required amount of admixture open. We find that this generalization does not help to achieve a `physically meaningful' unitary evaporation that conforms to the Page curve. This is in agreement with the result in \cite{gidd-4} that \emph{no order of correction in admixture with Bell pair states can restore unitarity in the process of Hawking radiation}.

\section{Conclusion} \label{sec-9}
Mathur's results clearly demonstrate that `small' corrections to the leading order Hawking analysis does not unitarize the black hole evaporation process. In this paper, we reassure this by analyzing a toy qubit model that incorporates a generalized quantum correlation between successive pairs. We then generalize Mathur's work by relaxing the `smallness' condition. We establish a rigorous nontrivial upper and lower bound for change in entanglement entropy, that helps us to parameterize the amount of correction to Hawking state in the form of one of the Bell pair states. Our results show that the correction required to restore unitarity in the evaporation process is `not-so-small' or even `large'. However, even if we allow such an evaporation law, we find that it is at odds with the predictions of Page curve. If we assume that our evolution law is unitary, it should not deviate much from the Page curve. Thus this leads us to reinforce the recent result that information paradox cannot be resolved if we stick to the picture of pair generation in Bell states.
\section*{Acknowledgements}

The authors gratefully acknowledge Mahbub Majumdar for his insightful directions and useful discussions in different steps of this work. A part of this work was done during the workshop on black hole information paradox at Harish-Chandra Research Institute, Allahabad, India in February 2014. AR would like to thank the participants of the workshop for useful discussions and helpful comments.

\bibliographystyle{unsrt}

\bibliography{mathur}

\end{document}